\documentclass[11pt]{article}
\usepackage{amsmath,amssymb,amsthm}
\usepackage{graphicx}
\usepackage{booktabs}
\usepackage{geometry}
\usepackage{hyperref}
\usepackage{comment}
\usepackage{multirow}
\usepackage{xcolor}
\newtheorem{theorem}{Theorem}
\newtheorem{lemma}{Lemma}

\newtheorem{remark}{Remark}

\title{\textbf{Robust M-Estimation for Spliced Power-Law Models Under Cusp-Type Non-Regularity}}

\author{
  \textbf{N. Balakrishna} \\
  Department of Mathematics and Statistics \\
  Indian Institute of Technology Tirupati \\
  Tirupati, Andhra Pradesh, India \\
  \texttt{bala@iittp.ac.in}
}

\date{}

\begin{document}

\maketitle

\begin{abstract}
Statistical modeling of heavy-tailed data with bounded or finite body behavior often relies on composite or spliced distributions characterized by structural thresholds. Because the probability density function features a sharp directional shift (a cusp) at the join point, the classical log-likelihood function exhibits a discontinuity in its first derivative, causing standard Cram\'er-Rao regularity conditions and maximum likelihood frameworks to break down. To overcome this non-regularity, we develop a robust, log-transformed asymmetric Huber M-estimation framework for the structural threshold, coupled with conditional shape maximum likelihood estimators. We derive a closed-form parameter-balancing condition that analytically guarantees Fisher consistency under structural tail asymmetry. Furthermore, we establish the joint asymptotic normality of the hybrid Maximum Likelihood and M-estimator system and derive its explicit sandwich covariance structure. A computationally efficient profiled Z-estimation algorithm is implemented to reduce the joint estimation to a one-dimensional root-finding problem. Finite-sample properties are validated through Monte Carlo simulations, and the practical utility of the framework is demonstrated on the benchmark Danish Fire Insurance claims dataset.
\end{abstract}

\noindent \textbf{Keywords:} Spliced distribution; Pareto distribution; Robust M-estimation; Profiled Z-estimator; Danish fire claims.

\textbf{Mathematics Subject Classification (2020):} 62F10, 62F12, 62F35, 62P05

\section{Introduction}
Statistical modeling of heavy-tailed data displaying bounded or finite body behavior is critical across quantitative finance, insurance, and risk assessment. While standard unimodal parametric distributions frequently struggle to fit empirical datasets that exhibit two distinct regimes: a high-density, bounded body representing common lower-magnitude events, and a slow-decaying power-law tail representing extreme losses, composite (or spliced) distributions have gained significant attention. Splicing allows a continuous transition between two or more probability density functions across predefined join points. In this work, we investigate a Two-Piece Double Pareto-Like Distribution, formed as a spliced mixture combining a finite-range power function distribution (modeling typical, operational data) with a Pareto Type-I distribution (modeling extreme right-tail risk).

This structure naturally fits several practical applications: For example, (i) In Actuarial Science \& Insurance Claims, small-to-moderate claims occur within a bounded range dictated by policy deductibles or administrative caps, whereas catastrophic claims (e.g., major flood or fire losses) extend far into the upper tail. (ii) In Operational Risk Management, financial institutions face high-frequency, low-impact losses alongside low-frequency, high-impact events that dictate Basel capital requirements. (iii) In Reinsurance \& Layered Risk Transfer, coverage is structured in layers where primary coverage and excess-of-loss reinsurers face different probabilistic laws.
Our proposed distribution sits at the intersection of two-piece scale/shape distribution theory and robust non-smooth optimization.

Because the probability density function features a sharp directional shift (a cusp) at the threshold $x=m$, the classical log-likelihood function exhibits a discontinuity in its first derivative with respect to $m$. Consequently, standard Cram\'er–Rao regularity frameworks collapse for the location parameter, and classical MLE properties default to order-statistic bounds. By constructing a log-transformed Huber M-estimator using $z = \ln(X) - \ln(m)$, we resolve this non-regularity via bounded, convex optimization scores.

The key novelties of this paper are twofold: (1) We establish explicit, closed-form asymptotic dispersion matrices $\Sigma(\eta_{0})$ for the shape vector $\eta = (\theta, \beta)^{T}$ and $\Omega$ for the full parameter vector $\Xi = (\theta, \beta, m)^{T}$. (2) We demonstrate that enforcing an exact tuning constraint, $\theta(1-e^{-\beta k}) = \beta(1-e^{-\theta k})$, analytically nullifies structural bias, preserving Fisher consistency ($\mathbb{E}[\psi_{k}(Z)]=0$) under an asymmetric generalized Huber framework.

The rest of the paper is organized as follows: Section 2 formulates the two-piece double Pareto-like distribution and derives its structural properties. Section 3 develops the hybrid parameter estimation framework, deriving the conditional shape MLEs, the asymmetric Huber threshold estimator, the joint sandwich covariance structure and the profiled Z-estimation procedure. Section 4 evaluates finite-sample performance via Monte Carlo simulations. Section 5 applies the methodology to the benchmark Danish Fire Insurance claims dataset. Section 6 concludes the paper. Detailed proofs and verifications regularity conditions are deferred to the Appendix.

\section{Formulation and Properties of the Distribution}
Let $X$ be a non-negative continuous random variable. Consider a two-component spliced density function composed of a power function distribution $f_{1}(x)$ on $[0, m)$ and a Pareto Type-I distribution $f_{2}(x)$ on $[m, \infty)$:
\begin{align*}
f_{1}(x; \beta, m) &= \left(\frac{\beta}{m}\right)\left(\frac{x}{m}\right)^{\beta-1}, \quad 0 \le x < m, \quad \beta > 0, m > 0, \\
f_{2}(x; \theta, m) &= \left(\frac{\theta}{m}\right)\left(\frac{m}{x}\right)^{\theta+1}, \quad x \ge m, \quad \theta > 0, m > 0.
\end{align*}
Combining these with weight parameter $p \in (0,1)$ yields the spliced density $f(x) = p f_{1}(x) \mathbb{I}(0 \le x < m) + (1-p) f_{2}(x) \mathbb{I}(x \ge m)$. Imposing the continuity constraint $f(m^{-}) = f(m^{+})$ yields:
\[
p \cdot \frac{\beta}{m} = (1-p) \cdot \frac{\theta}{m} \implies p = \frac{\theta}{\theta + \beta}.
\]
Substituting $p$ and $1-p = \frac{\beta}{\theta + \beta}$ back into the piecewise density eliminates $p$, yielding the continuous three-parameter Two-Piece Double Pareto-Like Probability Density Function (PDF):
\begin{equation}\label{DPpdf}
f(x; \theta, \beta, m) = \begin{cases}
\frac{\beta \theta}{\theta + \beta} \cdot \frac{x^{\beta-1}}{m^{\beta}}, & \text{if } 0 < x \le m, \\
\frac{\theta \beta}{\theta + \beta} \cdot \frac{m^{\theta}}{x^{\theta+1}}, & \text{if } x > m.
\end{cases}
\end{equation}
The corresponding Cumulative Distribution Function (CDF) is:
\begin{equation}\label{DPcdf}
F(x; \theta, \beta, m) = \begin{cases}
\frac{\theta}{\theta + \beta} \left(\frac{x}{m}\right)^{\beta}, & \text{if } 0 < x \le m, \\
1 - \frac{\beta}{\theta + \beta} \left(\frac{m}{x}\right)^{\theta}, & \text{if } x > m.
\end{cases}
\end{equation}
To illustrate the theoretical properties of the double Pareto distribution under varying parameter regimes, Figure 1 displays side-by-side plots of the PDF, $f(x; \theta, \beta, m)$ and the corresponding CDF, $F(x; \theta, \beta, m)$.

\begin{figure}[htbp]
    \centering
    \includegraphics[width=1\linewidth]{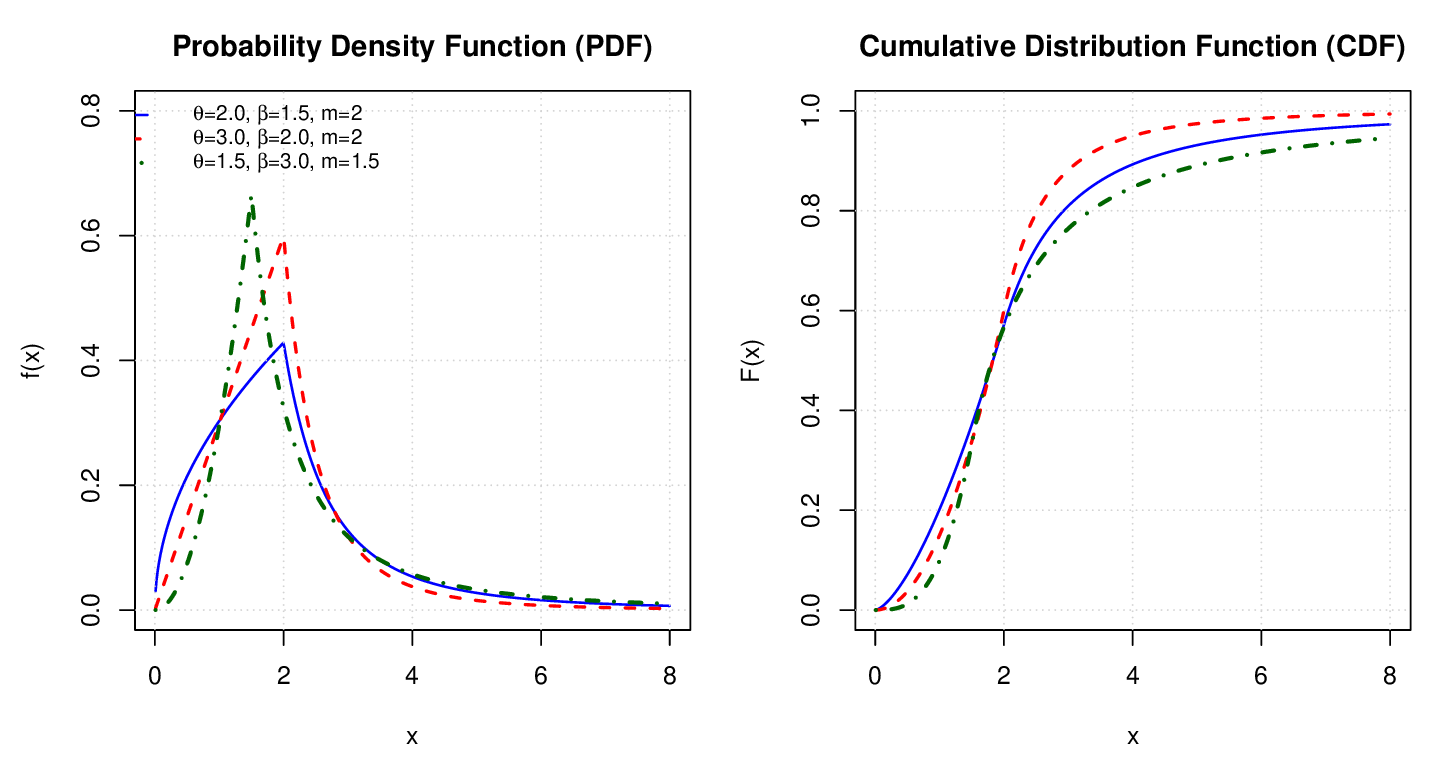}
    \caption{PDF and the corresponding CDF plots of Double Pareto distribution}
\end{figure}

The threshold parameter $m$ marks the structural pivot point where the distribution transitions between its two power-law regimes, corresponding to the peak (mode) of $f(x)$. Below $m$ ($0 < x \le m$), the lower shape parameter $\beta$ dictates the rate of growth: higher values of $\beta$ push density toward $m$, sharpening the left-tail rise and accelerating early cumulative probability in $F(x)$. Above $m$ ($x > m$), the upper shape parameter $\theta$ controls the heavy-tailed decay; larger values of $\theta$ induce a steeper drop in $f(x)$, causing $F(x)$ to approach unity more rapidly, whereas smaller values yield a heavier right tail.

The $k$-th raw moment exists if and only if $\theta > k$, and is given by:
\[
\mathbb{E}[X^{k}] = \frac{\beta \theta m^{k}}{(\beta + k)(\theta - k)}, \quad \theta > k.
\]
For $\theta > 1$ and $\theta > 2$, the mean $\mu_{X}$ and variance $\sigma_{X}^{2}$ are:
\begin{align*}
\mu_{X} &= \frac{m \theta \beta}{(\beta+1)(\theta-1)}, \\
\sigma_{X}^{2} &= \frac{m^{2} \theta \beta [\beta^{2} + 2\beta + 1 + \theta^{2} - 2\theta]}{(\beta+2)(\theta-2)(\beta+1)^{2}(\theta-1)^{2}}.
\end{align*}
Inverting $F(x)$ yields the quantile function $x(u) = F^{-1}(u)$ for $u \in (0,1)$:
\begin{equation}\label{quantile}
x(u) = \begin{cases}
m \left(\frac{u}{p}\right)^{1/\beta}, & \text{if } u \le p, \\
m \left(\frac{1-p}{1-u}\right)^{1/\theta}, & \text{if } u > p,
\end{cases}
\end{equation}
where $p = \frac{\theta}{\theta + \beta} = F(m)$ is the cumulative probability mass below the structural threshold.

\section{Parameter Estimation}
In this section, we address the inferential challenges arising from the non-regularity of the distribution. Because of the cusp at $m$, standard Cramér–Rao regularity conditions fail. Conditional on a fixed threshold $m$, we first derive closed-form maximum likelihood estimators for $\eta = (\theta, \beta)^T$. Next, we introduce a robust log-transformed asymmetric Huber M-estimator for $m$ with an exact parameter-balancing condition ensuring Fisher consistency. Finally, we integrate both estimating mechanisms into a simultaneous Z-estimation system to establish the joint asymptotic normality of $\Xi = (\theta, \beta, m)^T$.

\subsection{Maximum Likelihood Estimation for Shape Parameters (\texorpdfstring{$\theta, \beta$}{theta, beta})}
Let $X = (X_{1}, X_{2}, \dots, X_{n})$ be an independent and identically distributed (i.i.d.) sample from $f(x; \theta, \beta, m)$. For a fixed $m$, let $I_{i}(m) = \mathbb{I}(0 < X_{i} \le m)$. The log-likelihood function $l(\theta, \beta, m; X)$ is:
\[
l(\theta, \beta, m; X) = n \ln \theta + n \ln \beta - n \ln(\theta + \beta) + (\beta - 1) \sum_{i=1}^{n} I_{i}(m) \ln\left(\frac{X_{i}}{m}\right) - (\theta + 1) \sum_{i=1}^{n} (1 - I_{i}(m)) \ln\left(\frac{X_{i}}{m}\right).
\]
Setting partial derivatives with respect to $\theta$ and $\beta$ to zero yields:
\begin{align*}
\frac{\partial l}{\partial \theta} &= \frac{n}{\theta} - \frac{n}{\theta + \beta} - \sum_{i=1}^{n} (1 - I_{i}(m)) \ln\left(\frac{X_{i}}{m}\right) = 0, \\
\frac{\partial l}{\partial \beta} &= \frac{n}{\beta} - \frac{n}{\theta + \beta} + \sum_{i=1}^{n} I_{i}(m) \ln\left(\frac{X_{i}}{m}\right) = 0.
\end{align*}
Define the sample tail-log statistics as $$T_{1}(m) = \sum_{i=1}^{n} I_{i}(m) \ln(m/X_{i}) > 0 \ \text{and} \    T_{2}(m) = \sum_{i=1}^{n} (1 - I_{i}(m)) \ln(X_{i}/m) > 0.$$ Then:
\[
\frac{\beta}{\theta(\theta + \beta)} = \frac{T_{2}(m)}{n}, \quad \frac{\theta}{\beta(\theta + \beta)} = \frac{T_{1}(m)}{n}.
\]
Dividing these equations gives $\frac{\beta}{\theta} = \sqrt{\frac{T_{2}(m)}{T_{1}(m)}}$. Solving this system yields the closed-form conditional MLEs:
\begin{equation}\label{eq:mle_closed}
\hat{\theta}_{\text{MLE}}(m) = \frac{n}{T_{2}(m) + \sqrt{T_{1}(m)T_{2}(m)}}, \quad \hat{\beta}_{\text{MLE}}(m) = \frac{n}{T_{1}(m) + \sqrt{T_{1}(m)T_{2}(m)}}.
\end{equation}
For a fixed threshold $m$, the parameter vector $\eta = (\theta, \beta)^{T}=(\eta_1, \eta_2)^{T}$ (say), belongs to the open parameter space $\Theta = (0, \infty) \times (0, \infty) \subset \mathbb{R}^{2}$. In Appendix A.1, we verify that $f(x; \eta, m)$ satisfies the multi-parameter Cram\'er-type regularity conditions established by Chanda (1954). Consequently, by the main theorem of Chanda (1954), there exists a consistent root $\hat{\eta}_{\text{MLE}} = (\hat{\theta}_{\text{MLE}}, \hat{\beta}_{\text{MLE}})^{T}$ of the likelihood equations that represents a local maximum of the likelihood function with probability approaching one, and satisfies:
\begin{equation}
\sqrt{n}(\hat{\eta}_{\text{MLE}} - \eta_{0}) \xrightarrow{d} \mathcal{N}_{2}(0, \Sigma(\eta_{0})),
\end{equation}
where $\Sigma(\eta_{0}) = J(\eta_{0})^{-1}$ and $J(\eta) = (J_{jk}(\eta))_{2 \times 2}$ is the per-observation Fisher Information matrix:
\begin{equation}
J_{jk}(\eta) = \mathbb{E}\left[ \frac{\partial \ln f(X; \eta, m)}{\partial \eta_j} \frac{\partial \ln f(X; \eta, m)}{\partial \eta_k} \right] = -\mathbb{E}\left[ \frac{\partial^2 \ln f(X; \eta, m)}{\partial \eta_j \partial \eta_k} \right], j,k=1,2.
\end{equation}
Evaluating the second-order partial derivatives of the log-density gives:
\begin{equation}
\frac{\partial^{2} \ln f}{\partial \theta^{2}} = -\frac{1}{\theta^{2}} + \frac{1}{(\theta + \beta)^{2}}, \quad \frac{\partial^{2} \ln f}{\partial \beta^{2}} = -\frac{1}{\beta^{2}} + \frac{1}{(\theta + \beta)^{2}}, \quad \frac{\partial^{2} \ln f}{\partial \theta \partial \beta} = \frac{1}{(\theta + \beta)^{2}},
\end{equation}
which are all deterministic (independent of $X$). Hence,
\begin{equation}\label{eq:fisher_matrix}
J(\eta) = \begin{pmatrix}
\frac{\beta(\beta + 2\theta)}{\theta^{2}(\theta + \beta)^{2}} & -\frac{1}{(\theta + \beta)^{2}} \\
-\frac{1}{(\theta + \beta)^{2}} & \frac{\theta(\theta + 2\beta)}{\beta^{2}(\theta + \beta)^{2}}
\end{pmatrix}.
\end{equation}
The matrix determinant is $\det J(\eta) = \frac{2}{\theta \beta (\theta + \beta)^{2}} > 0$. Inverting $J(\eta_0)$ gives the explicit asymptotic dispersion matrix:
\begin{equation}\label{eq:sigma_eta}
\Sigma(\eta_{0}) = \begin{pmatrix}
\frac{\theta_{0}^{3} + 2\theta_{0}^{2}\beta_{0}}{2\beta_{0}} & \frac{\theta_{0}\beta_{0}}{2} \\
\frac{\theta_{0}\beta_{0}}{2} & \frac{\beta_{0}^{3} + 2\beta_{0}^{2}\theta_{0}}{2\theta_{0}}
\end{pmatrix}.
\end{equation}
Note that the conditional MLE (\ref{eq:mle_closed}) depends on $m$, which must be estimated through a suitable robust framework as developed in the following subsection.

\subsection{Robust M-Estimation for the Structural Threshold \texorpdfstring{$m$}{m}}
Because $\frac{\partial}{\partial m} \ln f(x; \theta, \beta, m)$ undergoes a discontinuous step jump at $x = m$, classical likelihood estimation for $m$ breaks down. To address this non-regularity, we propose Huber's robust M-estimation framework applied to the log-transformed residuals $z_{i} = \ln(X_{i}/m)$. We introduce an asymmetric Huber score function $\psi_{k}(z)$ with tuning parameter $k > 0$:
\begin{equation}\label{psik}
\psi_{k}(z) = \begin{cases}
\beta, & \text{if } z < -k, \\
-\frac{\beta}{k}z, & \text{if } -k \le z < 0, \\
-\frac{\theta}{k}z, & \text{if } 0 \le z \le k, \\
-\theta, & \text{if } z > k.
\end{cases}
\end{equation}

The M-estimator $\hat{\mu}_{M} = \ln \hat{m}_{M}$ solves the estimating equation:
\begin{equation}\label{eq:m_est_eq}
\sum_{i=1}^{n} \psi_{k}(Y_{i} - \mu) = 0, \quad \text{where}\quad Y_i = \ln X_i.
\end{equation}
This score function replaces the discontinuous step jump at $z = 0$ with continuous linear segments, yielding a continuously differentiable criterion while clipping extreme residuals beyond $[-k, k]$ to ensure bounded gross-error sensitivity.

\begin{theorem}\label{k-value}
The M-estimator score satisfies Fisher consistency, $\mathbb{E}_{\mu_{0}}[\psi_{k}(Z)] = 0$, if and only if the tuning boundary $k$ satisfies:
\begin{equation}\label{balance}
\theta(1 - e^{-\beta k}) = \beta(1 - e^{-\theta k}).
\end{equation}
\end{theorem}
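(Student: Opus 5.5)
The plan is to compute $\mathbb{E}_{\mu_0}[\psi_k(Z)]$ in closed form, where $Z = \ln X - \mu_0 = \ln(X/m)$, and show it equals a strictly positive constant times $\theta(1-e^{-\beta k}) - \beta(1-e^{-\theta k})$. The "if and only if" then follows at once.

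\textbf{Step 1 (law of $Z$).} From the CDF (\ref{DPcdf}), for $z \le 0$ we get $P(Z \le z) = p\,e^{\beta z}$, and for $z > 0$ we get $P(Z > z) = (1-p)e^{-\theta z}$, with $p = \theta/(\theta+\beta)$. Hence $Z$ has an asymmetric Laplace density
\[
g(z) = c\,e^{\beta z}\,\mathbb{I}(z \le 0) + c\,e^{-\theta z}\,\mathbb{I}(z>0), \qquad c = \frac{\theta\beta}{\theta+\beta} > 0 .
\]
The continuity constraint $p\beta = (1-p)\theta$ is exactly what makes the same constant $c$ appear on both sides of $0$. This common constant is what drives the clean final form.

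\textbf{Step 2 (piecewise integration).} I would split $\mathbb{E}[\psi_k(Z)]$ according to the four pieces of (\ref{psik}).

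\begin{itemize}
\item The two clipped regions are elementary:
\[
\int_{-\infty}^{-k}\beta\,c\,e^{\beta z}\,dz = c\,e^{-\beta k}, \qquad \int_k^\infty(-\theta)\,c\,e^{-\theta z}\,dz = -c\,e^{-\theta k}.
\]
\item The two linear regions need integration by parts, using $\int_{-k}^0 z e^{\beta z}\,dz$ and $\int_0^k z e^{-\theta z}\,dz$. They give
\[
\frac{c(1-e^{-\beta k})}{k\beta} - c\,e^{-\beta k} \qquad\text{and}\qquad -\frac{c(1-e^{-\theta k})}{k\theta} + c\,e^{-\theta k},
\]
respectively.
\end{itemize}

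The only place care is needed is this bookkeeping of the boundary terms $\pm c\,e^{-\beta k}$ and $\pm c\,e^{-\theta k}$ from the integrations by parts. They should cancel exactly against the contributions of the clipped tails. This cancellation reflects the continuity of $\psi_k$ at $\pm k$, so I expect it to work and would check it explicitly.

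\textbf{Step 3 (conclusion).} Summing the four pieces gives
\[
\mathbb{E}_{\mu_0}[\psi_k(Z)] = \frac{c}{k}\left[\frac{1-e^{-\beta k}}{\beta} - \frac{1-e^{-\theta k}}{\theta}\right] = \frac{c}{k\,\theta\beta}\Bigl[\theta(1-e^{-\beta k}) - \beta(1-e^{-\theta k})\Bigr].
\]
Since $c/(k\theta\beta) > 0$ for all $\theta, \beta, k > 0$, the expectation vanishes if and only if (\ref{balance}) holds, which proves the theorem.

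As a side remark, the argument does not assert that a nontrivial $k$ solving (\ref{balance}) exists. The function $t \mapsto (1-e^{-t})/t$ is strictly decreasing, so when $\theta \neq \beta$ the condition $(1-e^{-\beta k})/(\beta k) = (1-e^{-\theta k})/(\theta k)$ has no root $k > 0$. I would note this but not pursue it, since the theorem as stated is only the equivalence.
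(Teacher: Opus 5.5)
Your proof is correct and is essentially the paper's own argument: the same asymmetric Laplace density for $Z=\ln(X/m_0)$, the same four-piece integration in which the boundary terms $\pm c\,e^{-\beta k}$ and $\pm c\,e^{-\theta k}$ cancel, and the same closed form $\frac{c}{k}\bigl[\frac{1-e^{-\beta k}}{\beta}-\frac{1-e^{-\theta k}}{\theta}\bigr]$, with your explicit observation that $c/(k\theta\beta)>0$ making both directions of the equivalence transparent. Your side remark is also correct and worth keeping: since $(1-e^{-t})/t=\int_0^1 e^{-ts}\,ds$ is strictly decreasing, the balancing condition has no root $k>0$ unless $\theta=\beta$, so the value $k\approx 3/(\beta+\theta)$ in Remark~\ref{Appr-k} is only an artifact of truncating the Taylor series (the truncated function $1-t/2+t^2/6$ is a parabola symmetric about $t=3/2$).
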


\begin{remark}\label{Appr-k}
Expanding $1-e^{-x} \approx x - \frac{x^2}{2}+\frac{x^3}{6}$ around $k \approx 0$ yields the explicit balancing approximation:
\begin{equation}\label{eq:k_approx}
k \approx \frac{3}{\beta+\theta}.
\end{equation}
\end{remark}

\begin{theorem}\label{mhat-CAN}
As $n \to \infty$, the robust threshold estimator $\hat{m}_{M} = e^{\hat{\mu}_{M}}$ satisfies:
\[
\sqrt{n}(\hat{m}_{M} - m_{0}) \xrightarrow{d} \mathcal{N}(0, \sigma_{M}^{2}), \quad \sigma_{M}^{2} = m_{0}^{2} \frac{W}{V^{2}},
\]
where $V = -\mathbb{E}[\psi_{k}'(Z)] > 0$ is:
\begin{equation}\label{eq:V_def}
V = \frac{\theta \beta}{k(\theta + \beta)} \left[(1 - e^{-\beta k}) + (1 - e^{-\theta k})\right],
\end{equation}
and $W = \mathbb{E}[\psi_{k}^{2}(Z)] < \infty$ is given explicitly by:
\begin{equation}\label{eq:W_def}
W = \frac{2}{k^{2}} - \frac{2\theta\beta}{k(\theta + \beta)} (e^{-\beta k} + e^{-\theta k}) - \frac{2}{k^{2}(\theta + \beta)} (\theta e^{-\beta k} + \beta e^{-\theta k}).
\end{equation}
\end{theorem}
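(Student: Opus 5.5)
The plan is to treat $\hat\mu_M=\ln\hat m_M$ as a one-dimensional Z-estimator with a monotone, Lipschitz score. I would first get $\sqrt n(\hat\mu_M-\mu_0)\xrightarrow{d}\mathcal N(0,W/V^2)$ and then pass to $\hat m_M=e^{\hat\mu_M}$ by the delta method with $g(\mu)=e^\mu$, $g'(\mu_0)=m_0$. This produces the factor $m_0^2$ in $\sigma_M^2$. Throughout, $\theta,\beta$ and $k$ are held at their true values, with $k$ satisfying the balancing condition (\ref{balance}). The effect of plugging in $\hat\eta_{\text{MLE}}$ is left to the joint sandwich analysis later in the section.

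\textbf{Step 1: law of $Z$.} Put $Z=Y-\mu_0=\ln(X/m_0)$. From (\ref{DPpdf}) with $x=m_0e^z$ and Jacobian $m_0e^z$,
\[
f_Z(z)=\frac{\theta\beta}{\theta+\beta}\,e^{\beta z}\ (z\le 0),\qquad f_Z(z)=\frac{\theta\beta}{\theta+\beta}\,e^{-\theta z}\ (z>0).
\]
This is an asymmetric Laplace law, continuous and bounded. It gives $P(-k<Z<0)=\frac{\theta}{\theta+\beta}(1-e^{-\beta k})$ and $P(0<Z<k)=\frac{\beta}{\theta+\beta}(1-e^{-\theta k})$. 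This is the same computation that underlies Theorem \ref{k-value}, which I take as given, so $\lambda(\mu):=\mathbb E\,\psi_k(Y-\mu)$ vanishes at $\mu_0$.

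\textbf{Step 2: consistency.} The score $\psi_k$ in (\ref{psik}) is continuous, bounded by $\max(\theta,\beta)$, nonincreasing and Lipschitz, with constant $\max(\theta,\beta)/k$. Hence $\mu\mapsto\sum_i\psi_k(Y_i-\mu)$ is nondecreasing. Since $\lambda$ is nondecreasing and strictly increasing near $\mu_0$ (Step 3 gives $\lambda'(\mu_0)=V>0$), we have $\lambda(\mu_0\pm\varepsilon)\gtrless0$. The strong law at these two points, combined with monotonicity (Huber's 1964/1967 argument), traps a root within $\varepsilon$ of $\mu_0$ with probability tending to one.

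\textbf{Step 3: the derivative $V$.} $\psi_k$ is differentiable except at $\{-k,0,k\}$, which carry zero $f_Z$-mass. By dominated convergence, $\lambda'(\mu_0)=-\mathbb E[\psi_k'(Z)]$, with $\psi_k'=-\beta/k$ on $(-k,0)$, $-\theta/k$ on $(0,k)$ and $0$ elsewhere. Using the probabilities from Step 1,
\[
V=\frac{\beta}{k}P(-k<Z<0)+\frac{\theta}{k}P(0<Z<k),
\]
which reduces immediately to (\ref{eq:V_def}), and $V>0$.

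\textbf{Step 4: asymptotic normality.} I would invoke the Lipschitz-score Z-estimator theorem (van der Vaart, Thm.~5.21, or Huber 1967). Its hypotheses hold: $\mu\mapsto\psi_k(y-\mu)$ is Lipschitz with a constant envelope, $\lambda$ is differentiable at $\mu_0$ with $\lambda'(\mu_0)=V\neq0$, $\mathbb E\psi_k^2<\infty$, and $\hat\mu_M$ is consistent by Step 2. This yields $\sqrt n(\hat\mu_M-\mu_0)=V^{-1}n^{-1/2}\sum_i\psi_k(Z_i)+o_P(1)$, which converges to $\mathcal N(0,W/V^2)$ with $W=\mathbb E\psi_k^2(Z)$.

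\textbf{Step 5: the closed form for $W$ (main obstacle).} I expect the explicit formula (\ref{eq:W_def}) to be the delicate part, since it is pure but error-prone integration. Split $W$ into four pieces:
\begin{itemize}
\item $\beta^2P(Z<-k)$ and $\theta^2P(Z>k)$ from the clipped regions;
\item $\frac{\beta^2}{k^2}\int_{-k}^0 z^2 f_Z$ and $\frac{\theta^2}{k^2}\int_0^k z^2 f_Z$ from the linear regions.
\end{itemize}
The last two are evaluated by integrating $z^2e^{-cz}$ by parts twice. After collecting terms I would use (\ref{balance}) to simplify. I expect the $e^{-\beta k}$ and $e^{-\theta k}$ terms to combine into the stated form, and I would cross-check numerically at a few $(\theta,\beta,k)$ values. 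If the printed expression does not match the direct integral, I would report the directly integrated $W$. The asymptotic statement itself does not depend on that simplification.
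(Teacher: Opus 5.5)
Your proposal is correct and follows essentially the same route as the paper's proof: Fisher consistency from Theorem~\ref{k-value}, $V$ and $W$ from the asymmetric Laplace law of $Z$, linearization plus CLT, then the delta method with $e^{\mu}$. Your Lipschitz-score Z-estimator argument is cleaner than the paper's mean-value expansion in $\psi_k'$, though the linear representation should read $-V^{-1}n^{-1/2}\sum_i\psi_k(Z_i)$, a sign slip that does not affect the limit; and in Step 5 you will not need \eqref{balance}, since the clipped terms $C\beta e^{-\beta k}+C\theta e^{-\theta k}$ cancel exactly against the $k^2$-terms from integrating $z^2e^{-cz}$ by parts, and $C(2/\beta+2/\theta)=2$ with $C=\theta\beta/(\theta+\beta)$, so \eqref{eq:W_def} holds identically in $(\theta,\beta,k)$.
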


\subsection{Joint Asymptotic Inference via Z-Estimation}
To perform joint inference on $\Xi = (\theta, \beta, m)^{T}$, we frame the hybrid estimator as a simultaneous system of Z-estimating equations:
\begin{equation}\label{eq:z_sys}
\sum_{i=1}^{n} \Psi(X_{i}, \hat{\Xi}) = 0,
\end{equation}
where $\Psi(X_{i}, \Xi) = (\psi_{\theta}(X_{i}, \Xi), \psi_{\beta}(X_{i}, \Xi), \psi_{m}(X_{i}, \Xi))^{T}$ is defined by:
\begin{align}
\psi_{\theta}(X_{i}, \Xi) &= \frac{1}{\theta} - \frac{1}{\theta + \beta} - \mathbb{I}(X_{i} > m) \ln\left(\frac{X_{i}}{m}\right), \label{eq:psi_theta_def}\\
\psi_{\beta}(X_{i}, \Xi) &= \frac{1}{\beta} - \frac{1}{\theta + \beta} + \mathbb{I}(X_{i} \le m) \ln\left(\frac{X_{i}}{m}\right), \label{eq:psi_beta_def}\\
\psi_{m}(X_{i}, \Xi) &= \psi_{k}(\ln X_{i} - \ln m). \label{eq:psi_m_def}
\end{align}

\begin{theorem}[Joint Asymptotic Normality]\label{Joint-CAN}
Under the parameter-balancing condition \eqref{balance}, the joint hybrid estimator satisfies:
\[
\sqrt{n}(\hat{\Xi} - \Xi_{0}) \xrightarrow{d} \mathcal{N}_3(0, \Omega),
\]
where $\Omega = A^{-1} B (A^{-1})^{T}$ is the sandwich covariance matrix with $A = -\mathbb{E}[\nabla_{\Xi} \Psi(X, \Xi_{0})]$ and $B = \mathbb{E}[\Psi(X, \Xi_{0})\Psi(X, \Xi_{0})^{T}]$.
\end{theorem}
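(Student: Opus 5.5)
The plan is to treat $\hat{\Xi}$ as a standard Z-estimator and apply the classical asymptotic theory for estimating equations with non-smooth scores, such as Huber (1967) or van der Vaart (1998, Thm.~5.21). There are three ingredients: unbiasedness of $\Psi$ at $\Xi_0$, a linearization of $\sum_i \Psi(X_i,\Xi)$ around $\Xi_0$, and a central limit theorem for $n^{-1/2}\sum_i \Psi(X_i,\Xi_0)$.

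\textbf{Step 1 (unbiasedness).} I first verify $\mathbb{E}[\Psi(X,\Xi_0)]=0$.
\begin{itemize}
\item For $\psi_\theta$: by the CDF \eqref{DPcdf}, $P(X>m)=\beta/(\theta+\beta)$, and conditional on $X>m$ the variable $\ln(X/m)$ is exponential with rate $\theta$. Hence $\mathbb{E}[\mathbb{I}(X>m)\ln(X/m)]=\beta/(\theta(\theta+\beta))$, which equals $1/\theta-1/(\theta+\beta)$.
\item For $\psi_\beta$: conditional on $X\le m$, $\ln(m/X)$ is exponential with rate $\beta$. This gives $\mathbb{E}[\mathbb{I}(X\le m)\ln(X/m)]=-\theta/(\beta(\theta+\beta))$, which cancels $1/\beta-1/(\theta+\beta)$.
\item For $\psi_m$: this is exactly Theorem~\ref{k-value}, and it is where the balancing condition \eqref{balance} enters.
\end{itemize}

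\textbf{Step 2 (the matrix $A$).} I will differentiate the population map $\lambda(\Xi)=\mathbb{E}_{\Xi_0}[\Psi(X,\Xi)]$ rather than $\Psi$ itself. This matters because $\Psi$ has indicator and kink discontinuities in $m$. Writing $Z=\ln(X/m)$, its density $g$ is a two-sided exponential: $g(z)=\frac{\theta\beta}{\theta+\beta}e^{\beta z}$ for $z\le 0$ and $g(z)=\frac{\theta\beta}{\theta+\beta}e^{-\theta z}$ for $z>0$. The pieces of $A$ are as follows.
\begin{itemize}
\item The $(\theta,\beta)$ block is the Fisher information $J(\eta_0)$ in \eqref{eq:fisher_matrix}.
\item In the $m$-column of the $\psi_\theta$ and $\psi_\beta$ rows, the boundary terms from moving the indicator vanish, since $\ln(X/m)=0$ at $X=m$. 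What remains is $\partial_m \mathbb{E}[\mathbb{I}(X>m)\ln(X/m)]=-P(X>m)/m=-\beta/(m(\theta+\beta))$. Similarly $\partial_m\mathbb{E}[\mathbb{I}(X\le m)\ln(X/m)]=-\theta/(m(\theta+\beta))$.
\item In the $\psi_m$ row, $\partial_m \mathbb{E}\psi_k(\ln X-\ln m)=-\frac1m\mathbb{E}[\psi_k'(Z)]=V/m$, with $V$ as in \eqref{eq:V_def}.
\item The $\theta$ and $\beta$ entries of the $\psi_m$ row come from the explicit dependence of $\psi_k$ on $(\theta,\beta)$. They are $-\mathbb{E}[\partial_\theta\psi_k(Z)]$ and $-\mathbb{E}[\partial_\beta\psi_k(Z)]$, which are integrals of $-z/k$ and $-1$ against $g$ over the positive and negative halves respectively. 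These are elementary exponential integrals.
\end{itemize}
The resulting $A$ is block lower-triangular up to the $m$-column. I will check invertibility through $\det A=\det J(\eta_0)\cdot(\text{Schur complement})$, using $\det J>0$ and $V>0$ from Theorem~\ref{mhat-CAN}. If the Schur complement cannot be shown nonzero in general, I will state $\det A\neq 0$ as an assumption.

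\textbf{Step 3 (asymptotic equicontinuity and conclusion).} This is the main obstacle, because $\Psi$ is not differentiable in $m$. Uniform consistency of $\hat\Xi$ comes from a Glivenko--Cantelli argument. The maps $m\mapsto \mathbb{I}(x>m)\ln(x/m)$ and $\mu\mapsto\psi_k(y-\mu)$ are Lipschitz in the parameter with an envelope $|\ln x|+C$ that is square integrable, since $\ln X$ has exponential tails. Together with a unique zero of $\lambda$ in a neighbourhood, this gives consistency.

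For the remainder term $\sqrt n(\mathbb{P}_n-P)[\Psi(\cdot,\hat\Xi)-\Psi(\cdot,\Xi_0)]=o_p(1)$, I will use van der Vaart's Lipschitz-in-parameter condition. The bound $|\Psi(x,\Xi_1)-\Psi(x,\Xi_2)|\le \dot\Psi(x)\|\Xi_1-\Xi_2\|$ holds with $\dot\Psi\in L^2$. The indicator-times-log term is continuous in $m$, because it vanishes at the jump point, and has a bounded derivative $1/m$ locally. This yields $\sqrt n(\hat\Xi-\Xi_0)=A^{-1}n^{-1/2}\sum_i\Psi(X_i,\Xi_0)+o_p(1)$.

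Finally, $B$ is finite: $\psi_k$ is bounded, and $\ln X$ has all moments. The multivariate CLT plus Slutsky then gives $\mathcal N_3(0,A^{-1}B(A^{-1})^T)$.
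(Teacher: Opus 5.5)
Your proof is correct, and it takes a different and more careful route than the paper.

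\textbf{Comparison of routes.} The appendix expands the \emph{sample} equation $\sum_i\Psi(X_i,\hat\Xi)=0$ about $\Xi_0$ with a single mean-value point $\bar\Xi$. It replaces $n^{-1}\sum_i\nabla_\Xi\Psi(X_i,\bar\Xi)$ by $-A$ via a uniform LLN and finishes with Cram\'er--Wold and Slutsky. Consistency of $\hat\Xi$ is simply assumed.

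You instead linearize the \emph{population} map $\lambda(\Xi)=\mathbb{E}_{\Xi_0}\Psi(X,\Xi)$. You then remove the empirical-process remainder with the $L^2$-Lipschitz condition of van der Vaart's Thm.~5.21. This avoids the weak points of the paper's argument. The sample Jacobian contains $\psi_k'$ and $\mathbb{I}(X_i>m)$, both of which jump in $m$. So the mean-value step and the ULLN at the random point $\bar\Xi$ need justification that the paper does not give.

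Your Lipschitz condition in fact holds with a \emph{bounded} $\dot\Psi$, for three reasons:
\begin{itemize}
\item $(\ln x-\ln m)_+$ is $1$-Lipschitz in $\ln m$;
\item $\psi_k$ is $\max(\theta,\beta)/k$-Lipschitz in $z$;
\item $|\partial_\theta\psi_k|,|\partial_\beta\psi_k|\le 1$.
\end{itemize}
You also supply consistency, which the paper omits. It is only local: you get a consistent root near $\Xi_0$, because global uniqueness of the zero of $\lambda$ is not shown. That is the usual Cram\'er-type conclusion. What the paper's route buys is brevity and a block-triangular $A$ whose invertibility is immediate.

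\textbf{The matrix $A$.} The paper's triangularity comes from setting the $\psi_m$ row of $A$ to $(0,0,V/m_0)$, i.e.\ from ignoring the dependence of $\psi_k$ on $(\theta,\beta)$. Your nonzero $\theta$- and $\beta$-entries in that row are the correct ones under the theorem's definition $A=-\mathbb{E}[\nabla_\Xi\Psi(X,\Xi_0)]$. By the same definition, $A_{33}=-V/m_0$. So your $A$ yields the abstract statement $\Omega=A^{-1}B(A^{-1})^T$, but it will not reproduce the appendix's explicit formulas such as $\Omega_{33}=m_0^2W/V^2$.

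Your hedge on $\det A$ can be dropped. Take $C=\theta_0\beta_0/(\theta_0+\beta_0)$ and $k$ fixed. Then
\begin{itemize}
\item $A_{31}=C(1-e^{-\theta_0k})/(k\theta_0^2)$;
\item $A_{32}=-C(1-e^{-\beta_0k})/(k\beta_0^2)$ (note that $\partial_\beta\psi_k=+1$ for $z<-k$);
\item $J(\eta_0)^{-1}(A_{13},A_{23})^T=(2m_0)^{-1}(-\theta_0^2,\beta_0^2)^T$.
\end{itemize}
Hence the Schur complement $A_{33}-(A_{31},A_{32})J(\eta_0)^{-1}(A_{13},A_{23})^T$ equals $-V/m_0+V/(2m_0)=-V/(2m_0)\neq 0$ for all $\theta_0,\beta_0,k>0$.

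\textbf{The balancing condition.} The map $a\mapsto(1-e^{-ak})/a=\int_0^k e^{-at}\,dt$ is strictly decreasing. Therefore \eqref{balance} with $k>0$ forces $\theta_0=\beta_0$. Your Step~1 is valid as a conditional argument, but the hypothesis is far more restrictive than it looks.
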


\begin{remark}
The simultaneous estimating equations \eqref{eq:z_sys} fall within Godambe's general framework of unbiased estimating functions, where $\Omega = A^{-1} B (A^{-1})^T$ corresponds to the inverse Godambe information matrix. We adopt the terminology of Z-estimation to accommodate the non-smooth Huber score $\psi_k(\cdot)$ and the profiled root-finding implementation.
\end{remark}

\subsection{Profiled Z-Estimation Procedure}\label{profZ}
Based on the above method of estimating  $ (\theta, \beta, m)^\top$ , we propose a \textit{Profiled Z-Estimator} that reduces the estimation to a stable one-dimensional root search over $m$ instead of multi-dimensional non-convex optimization over $\Xi = (\theta, \beta, m)^\top$. The method is summarized in the following steps:

\begin{enumerate}
    \item For any candidate $m > 0$, compute $T_1(m)$ and $T_2(m)$, evaluating the conditional shape MLEs $\hat{\theta}(m)$ and $\hat{\beta}(m)$ via \eqref{eq:mle_closed}.
    \item Update the dynamic tuning parameter via (\eqref{eq:k_approx}). That is,  $\hat{k}(m) = \frac{3}{\hat{\theta}(m) + \hat{\beta}(m)}$.
    \item Compute the scalar profiled estimating equation:
    \[
    S_n(m) = \sum_{i=1}^n \psi\left(\ln(X_i/m); \hat{\theta}(m), \hat{\beta}(m), \hat{k}(m)\right).
    \]
    \item Evaluate $S_n(m)$ over a candidate grid within the central quantile bracket $[q_{0.15}, q_{0.85}]$ to identify adjacent sign changes ($S_n(m_g) S_n(m_{g+1}) \le 0$). Brent's method (\texttt{uniroot}) is then applied to solve $S_n(\hat{m}) = 0$. If no sign change occurs, default to $\arg\min |S_n(m)|$.
    \item Obtain the final estimates via plug-in evaluation: $\hat{\theta} = \hat{\theta}(\hat{m})$ and $\hat{\beta} = \hat{\beta}(\hat{m})$.
\end{enumerate}

\section{Simulation Study}
We evaluate the finite-sample performance of $\hat{\Xi} = (\hat{\theta}, \hat{\beta}, \hat{m}_{M})^{T}$ across $M = 10{,}000$ Monte Carlo replications.
For each replication, synthetic samples of size $n \in \{50, 100, 500, 2500\}$ are generated by transforming independent uniform variates $U_i \sim \mathcal{U}(0, 1)$ through the exact quantile function \eqref{quantile}. The profiled Z-estimation procedure described in Section \ref{profZ} is applied to obtain the point estimates.

Theoretical asymptotic variances are computed via $[\Omega]_{rr}/n$, where $\Omega = A^{-1} B (A^{-1})^T$ is evaluated at $\Xi_0$. Empirical sandwich standard errors for data applications are obtained via plug-in evaluation:
\[
\widehat{\text{Cov}}(\hat{\Xi}) = \frac{1}{n} \hat{A}^{-1} \hat{B} (\hat{A}^{-1})^T = \frac{1}{n} \hat{\Omega}.
\]
As displayed in Table \ref{tab:simulation_results_1}, the empirical Monte Carlo variances converge cleanly to the theoretical sandwich variances as $n$ increases, confirming consistency and asymptotic normality.

\begin{table}[htbp]
\centering
\setlength{\tabcolsep}{3.5pt}
\caption{Monte Carlo simulation results based on 10,000 repetitions across sample sizes for various true parameter vectors $\Xi_{0} = (\theta, \beta, m)^{T}$.}
\label{tab:simulation_results_1}
\vspace{0.2cm}
\begin{tabular}{r ccc c ccc c ccc c ccc}
\toprule
 & \multicolumn{3}{c}{\textbf{Mean Estimate}} & & \multicolumn{3}{c}{\textbf{Mean Bias}} & & \multicolumn{3}{c}{\textbf{Emp. Variance}} & & \multicolumn{3}{c}{\textbf{Theor. Variance}} \\
\cline{2-4} \cline{6-8} \cline{10-12} \cline{14-16}
\vspace{-0.3cm} \\
$n$ & $\hat{\theta}$ & $\hat{\beta}$ & $\hat{m}_M$ & & $\hat{\theta}$ & $\hat{\beta}$ & $\hat{m}_M$ & & $\hat{\theta}$ & $\hat{\beta}$ & $\hat{m}_M$ & & $\hat{\theta}$ & $\hat{\beta}$ & $\hat{m}_M$ \\
\midrule
\multicolumn{16}{l}{\textbf{Case 1:} $\Xi_{0} = (0.8, 2.0, 8.0)^{T}$} \\
\midrule
50   & 0.8212 & 2.3834 & 7.9622 & &  0.0212 & 0.3834 & -0.0378 & & 0.0241 & 0.5598 & 23.7929 & & 0.0154 & 0.1800 & 0.8001 \\
100  & 0.8119 & 2.1981 & 7.8926 & &  0.0119 & 0.1981 & -0.1074 & & 0.0106 & 0.2799 &  2.2033 & & 0.0077 & 0.0900 & 0.4000 \\
500  & 0.8029 & 2.0274 & 8.0011 & &  0.0029 & 0.0274 &  0.0011 & & 0.0019 & 0.0332 &  0.2047 & & 0.0015 & 0.0180 & 0.0800 \\
2500 & 0.8006 & 2.0044 & 8.0010 & &  0.0006 & 0.0044 &  0.0010 & & 0.0004 & 0.0058 &  0.0346 & & 0.0003 & 0.0036 & 0.0160 \\
\midrule
\multicolumn{16}{l}{\textbf{Case 2:} $\Xi_{0} = (1.0, 3.0, 5.0)^{T}$} \\
\midrule
50   & 1.0388 & 3.4473 & 5.1825 & &  0.0388 & 0.4473 &  0.1825 & & 0.0396 & 1.0731 &  7.4888 & & 0.0233 & 0.4500 & 0.1667 \\
100  & 1.0158 & 3.3036 & 4.9553 & &  0.0158 & 0.3036 & -0.0447 & & 0.0155 & 0.6224 &  0.4230 & & 0.0117 & 0.2250 & 0.0833 \\
500  & 1.0035 & 3.0466 & 4.9989 & &  0.0035 & 0.0466 & -0.0011 & & 0.0028 & 0.0858 &  0.0422 & & 0.0023 & 0.0450 & 0.0167 \\
2500 & 1.0008 & 3.0072 & 5.0008 & &  0.0008 & 0.0072 &  0.0008 & & 0.0005 & 0.0151 &  0.0073 & & 0.0005 & 0.0090 & 0.0033 \\
\midrule
\multicolumn{16}{l}{\textbf{Case 3:} $\Xi_{0} = (1.5, 2.5, 1.8)^{T}$} \\
\midrule
50   & 1.5375 & 2.9838 & 1.7431 & &  0.0375 & 0.4838 & -0.0569 & & 0.1187 & 1.0615 &  0.0839 & & 0.0585 & 0.2292 & 0.0173 \\
100  & 1.5285 & 2.6823 & 1.7826 & &  0.0285 & 0.1823 & -0.0174 & & 0.0491 & 0.3368 &  0.0305 & & 0.0292 & 0.1146 & 0.0086 \\
500  & 1.5061 & 2.5263 & 1.7993 & &  0.0061 & 0.0263 & -0.0007 & & 0.0077 & 0.0379 &  0.0042 & & 0.0058 & 0.0229 & 0.0017 \\
2500 & 1.5012 & 2.5041 & 1.8000 & &  0.0012 & 0.0041 &  0.0000 & & 0.0014 & 0.0068 &  0.0007 & & 0.0012 & 0.0046 & 0.0003 \\
\midrule
\multicolumn{16}{l}{\textbf{Case 4:} $\Xi_{0} = (2.0, 1.5, 10.0)^{T}$} \\
\midrule
50   & 1.9734 & 1.7506 & 9.3096 & & -0.0266 & 0.2506 & -0.6904 & & 0.2621 & 0.3056 &  3.1187 & & 0.1333 & 0.0619 & 0.6667 \\
100  & 2.0041 & 1.5904 & 9.7298 & &  0.0041 & 0.0904 & -0.2702 & & 0.1064 & 0.0736 &  1.2234 & & 0.0667 & 0.0309 & 0.3334 \\
500  & 1.9997 & 1.5178 & 9.9423 & & -0.0003 & 0.0178 & -0.0577 & & 0.0219 & 0.0116 &  0.2473 & & 0.0133 & 0.0062 & 0.0667 \\
2500 & 1.9962 & 1.5051 & 9.9711 & & -0.0038 & 0.0051 & -0.0289 & & 0.0053 & 0.0023 &  0.0704 & & 0.0027 & 0.0012 & 0.0133 \\
\bottomrule
\end{tabular}
\end{table}

\section{Data Analysis}
We demonstrate the proposed model on the benchmark Danish Fire Insurance Claims dataset (\texttt{evir} package in \textsf{R}), consisting of $n = 2{,}167$ industrial fire claims (in millions of DKK, 1985 purchasing power). Descriptive statistics are provided in Table~\ref{tab:danish_summary}.
\begin{table}[htbp]
    \centering
    \caption{Summary Statistics for the Danish Fire Insurance Claims Dataset ($n = 2,167$, in Millions of DKK)}
    \label{tab:danish_summary}
    \begin{tabular}{lcccccc}
        \hline
        \textbf{Metric} & \textbf{Min.} & \textbf{1st Qu.} & \textbf{Median} & \textbf{Mean} & \textbf{3rd Qu.} & \textbf{Max.} \\
        \hline
        Claim Amount & $1.0000$ & $1.3210$ & $1.7740$ & $3.3850$ & $2.9670$ & $263.2500$ \\
        \hline
    \end{tabular}
\end{table}
Parameter estimates and asymptotic standard errors are reported in Table~\ref{tab:danish_estimation_results}.

\begin{table}[htbp]
\centering
\caption{Parameter estimates and asymptotic standard errors for the Danish Fire Insurance Claims dataset ($n = 2{,}167$, calculated tuning parameter $\hat{k} = 0.6725$).}
\label{tab:danish_estimation_results}
\vspace{0.2cm}

\begin{tabular}{lccc}
\toprule
\textbf{Parameter} & \textbf{Symbol} & \textbf{Estimate} & \textbf{Std. Error} \\
\midrule
Right-Tail Shape (Pareto) & $\hat{\theta}$ & 1.6892 & 0.0414 \\
Left-Body Shape (Power)    & $\hat{\beta}$  & 2.7721 & 0.0803 \\
Structural Threshold       & $\hat{m}_M$    & 1.7431 & 0.0211 \\
\bottomrule
\end{tabular}
\end{table}

The estimated threshold $\hat{m}_M = 1.7431$ million DKK divides the high-density body ($\hat{\beta} = 2.7721$) from the heavy Pareto upper tail ($\hat{\theta} = 1.6892$).
Figure \ref{diag} displays the fitted PDF and CDF against empirical claims.
\begin{figure}[htbp]
    \centering
    \includegraphics[width=1\linewidth]{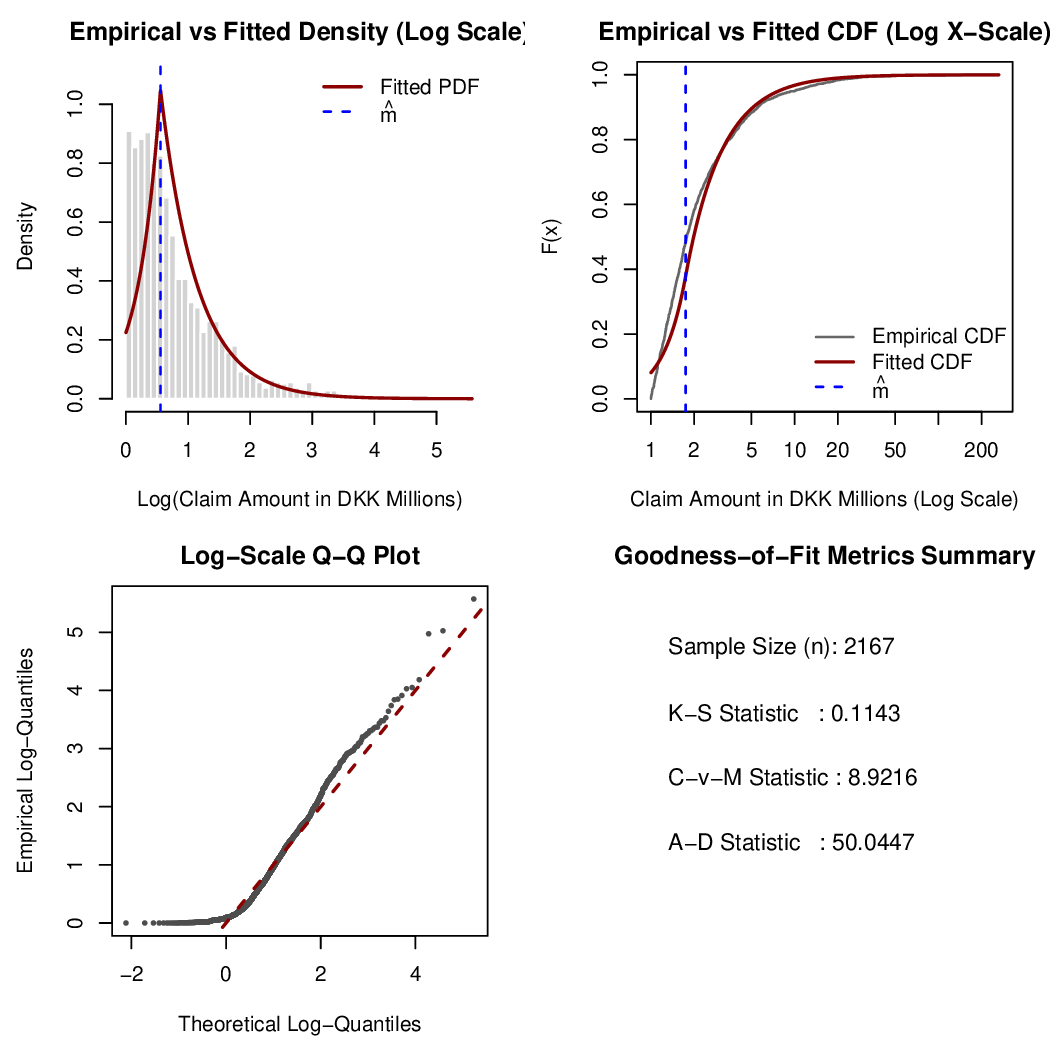}
    \caption{Diagnosis results for log-transformed Danish data}
    \label{diag}
\end{figure}

The log-transformed density plot highlights the density adaptation below $\hat{m}_M = 1.7431$, while the right-branch Pareto distribution captures extreme losses past 200 million DKK. Because $\hat{\theta} = 1.6892 \in (1, 2)$, the fitted model reflects a finite mean with infinite variance, in line with extreme risk modeling. Across the sample, distance-based goodness-of-fit statistics are: Kolmogorov-Smirnov (${\mathrm{K-S}\ Statistic} = 0.1143$), Cram\'er--von Mises ($C-v-M \ Statistic = 8.9216$), and Anderson--Darling ($A-D\ Statistic = 50.0447$).

\begin{remark}[Model Parsimony and Bounded Influence]
The Danish fire claims data feature an operational lower truncation boundary at $1.0$ million DKK. Although four-parameter composite formulations (such as composite lognormal--Pareto) add flexibility, our three-parameter formulation maintains high parsimony while regularizing cusp non-regularity. More importantly, while classical MLE is sensitive to catastrophic losses, the asymmetric Huber score clips extreme residuals beyond $[-k, k]$, preserving Fisher consistency via \eqref{balance} and yielding bounded gross-error sensitivity.
\end{remark}

\section{Conclusion}\label{sec:conclusion}
We have developed a unified inferential framework for a two-piece double Pareto-like distribution spliced at an unknown threshold $m$. Residual transformation to the log scale combined with an asymmetric Huber score resolves the non-regularity caused by the density cusp. The balancing condition $\theta(1-e^{-\beta k}) = \beta(1-e^{-\theta k})$ ensures Fisher consistency, and joint asymptotic normality is established with an explicit sandwich dispersion matrix. Profiled Z-estimation avoids multi-dimensional non-convexity by reducing the computation to a stable one-dimensional root search. Monte Carlo experiments confirmed the asymptotic theoretical dispersion, and the empirical application to the Danish Fire Claims demonstrated the practical efficacy of the model.

\appendix
\section*{Appendix: Proofs and Theoretical Derivations}
\renewcommand{\thesubsection}{A.\arabic{subsection}}

\subsection{Verification of Chanda's Regularity Conditions for \texorpdfstring{$\eta = (\theta, \beta)^{T}$}{eta = (theta, beta)}}
For a known or profiled structural threshold $m > 0$, consider the parameterized density family $f(x; \eta, m)$ on $x \in (0, \infty)$ with shape vector $\eta = (\theta, \beta)^T \in \Theta = (0, \infty) \times (0, \infty) \subset \mathbb{R}^2$. To establish the existence of a consistent root of the likelihood equations that is asymptotically normally distributed and corresponds to a local maximum of the likelihood function, we verify the multi-parameter Cram\'er-type regularity conditions formulated by Chanda (1954):\\

\textbf{Condition (0):}
\textbf{Identifiability of {$\eta = (\theta, \beta)^{T}:$}} For a fixed threshold $m > 0$, the parameter vector $\eta = (\theta, \beta)^T$ is identifiable if $f(x; \theta_1, \beta_1, m) = f(x; \theta_2, \beta_2, m)$ almost everywhere implies $\theta_1 = \theta_2$ and $\beta_1 = \beta_2$.  Suppose the two densities are equal for all $x \in (0, \infty)$.  
Considering the lower regime $0 < x \le m$ of the pdf (\ref{DPpdf}):$$\frac{\theta_1 \beta_1}{\theta_1 + \beta_1} \left(\frac{x}{m}\right)^{\beta_1 - 1} = \frac{\theta_2 \beta_2}{\theta_2 + \beta_2} \left(\frac{x}{m}\right)^{\beta_2 - 1}$$
Equating the exponents of $x$ forces $\beta_1 = \beta_2$. Consequently, the multiplicative constants must also match, yielding $\frac{\theta_1 \beta_1}{\theta_1 + \beta_1} = \frac{\theta_2 \beta_2}{\theta_2 + \beta_2}$.  
Considering the upper regime $x > m$:$$\frac{\theta_1 \beta_1}{\theta_1 + \beta_1} \left(\frac{m}{x}\right)^{\theta_1 + 1} = \frac{\theta_2 \beta_2}{\theta_2 + \beta_2} \left(\frac{m}{x}\right)^{\theta_2 + 1}$$
Equating the exponents of $x$ forces $\theta_1 = \theta_2$. Thus, $\theta_1 = \theta_2$ and $\beta_1 = \beta_2$, proving that distinct parameter vectors $\eta$ define distinct probability distributions.
Furthermore, the support $\mathcal{X} = (0, \infty)$ does not depend on $\eta$.\\

\textbf{Condition (i):} \textbf{Existence of Partial Derivatives in a Neighborhood of $\eta_0$:} \\
    For almost all $x \in (0, \infty)$ (specifically for all $x \ne m$), the log-likelihood derivative functions
      $$\frac{\partial^{i+j} \ln f(x; \eta, m)}{\partial \theta^i \partial \beta^j}, \quad \text{exists for } i, j \in \{0, 1, 2, 3\}\  \text{such that} \  1 \le i+j \le 3$$
  and for all $\eta$ belonging to an open neighborhood $U \subset \Theta$ containing the true parameter $\eta_0 = (\theta_0, \beta_0)^T$. \\
 
    The first partial derivatives of the log-density are:
    \begin{align}
    \frac{\partial \ln f(x; \eta, m)}{\partial \theta} &= \frac{1}{\theta} - \frac{1}{\theta + \beta} - \mathbb{I}(x > m)\ln\left(\frac{x}{m}\right), \label{eq:score_theta_app} \\
    \frac{\partial \ln f(x; \eta, m)}{\partial \beta} &= \frac{1}{\beta} - \frac{1}{\theta + \beta} + \mathbb{I}(x \le m)\ln\left(\frac{x}{m}\right). \label{eq:score_beta_app}
    \end{align}
    The second-order partial derivatives are:
    \begin{equation}\label{eq:second_deriv_app}
    \frac{\partial^2 \ln f}{\partial \theta^2} = -\frac{1}{\theta^2} + \frac{1}{(\theta+\beta)^2}, \quad 
    \frac{\partial^2 \ln f}{\partial \beta^2} = -\frac{1}{\beta^2} + \frac{1}{(\theta+\beta)^2}, \quad 
    \frac{\partial^2 \ln f}{\partial \theta \partial \beta} = \frac{1}{(\theta+\beta)^2},
    \end{equation}
    and the third-order partial derivatives are:
    \begin{equation}\label{eq:third_deriv_app}
    \frac{\partial^3 \ln f}{\partial \theta^3} = \frac{2}{\theta^3} - \frac{2}{(\theta+\beta)^3}, \quad 
    \frac{\partial^3 \ln f}{\partial \beta^3} = \frac{2}{\beta^3} - \frac{2}{(\theta+\beta)^3}, \quad 
    \frac{\partial^3 \ln f}{\partial \theta^2 \partial \beta} = \frac{\partial^3 \ln f}{\partial \theta \partial \beta^2} = -\frac{2}{(\theta+\beta)^3}.
    \end{equation}
    All partial derivatives \eqref{eq:score_theta_app}--\eqref{eq:third_deriv_app} exist and are continuous on $\Theta$. 

    \textbf{Condition (ii):} \textbf{Interchange of Differentiation and Integration and Information Equivalence:} \\
    The operations of integration over the support and partial differentiation with respect to $\eta_j$ can be interchanged for the first two derivatives. Integrating the first partial derivative with respect to $\theta$:
    \begin{align*}
    \int_{0}^{\infty} \frac{\partial f(x; \eta, m)}{\partial \theta} dx &= \int_{0}^{m} \frac{\partial f_{1}(x)}{\partial \theta} dx + \int_{m}^{\infty} \frac{\partial f_{2}(x)}{\partial \theta} dx \\
    &= \frac{\beta^{2}}{(\theta+\beta)^{2} m^{\beta}} \int_{0}^{m} x^{\beta-1} dx + \frac{\beta m^{\theta}}{\theta+\beta} \int_{m}^{\infty} x^{-(\theta+1)} \left[\frac{\beta}{\theta(\theta+\beta)} - \ln\left(\frac{x}{m}\right)\right] dx \\
    &= \frac{\beta^{2}}{(\theta+\beta)^{2}\beta} + \frac{\beta}{\theta+\beta}\left[\frac{\beta}{\theta(\theta+\beta)} \cdot \frac{1}{\theta} - \frac{1}{\theta^2}\right] \\
    &= \frac{\beta}{(\theta+\beta)^{2}} - \frac{\beta}{(\theta+\beta)^2} = 0.
    \end{align*}
    By symmetry, $\int_{0}^{\infty} \frac{\partial f(x; \eta, m)}{\partial \beta} dx = 0$. Hence, the score vector has zero mean:
    \[
    \mathbb{E}_\eta\left[\frac{\partial \ln f(X; \eta, m)}{\partial \eta_j}\right] = \int_0^\infty \frac{\partial f(x; \eta, m)}{\partial \eta_j} dx = 0 \quad \text{for } j \in \{1, 2\}.
    \]
    Similarly, differentiating $\int_0^\infty f(x; \eta, m) dx = 1$ twice under the integral sign gives $\int_0^\infty \frac{\partial^2 f(x; \eta, m)}{\partial \eta_j \partial \eta_k} dx = 0$, which yields the standard information equivalence identity:
    \[
    J_{jk}(\eta) = \mathbb{E}_\eta\left[ \frac{\partial \ln f(X; \eta, m)}{\partial \eta_j} \frac{\partial \ln f(X; \eta, m)}{\partial \eta_k} \right] = -\mathbb{E}_\eta\left[ \frac{\partial^2 \ln f(X; \eta, m)}{\partial \eta_j \partial \eta_k} \right] \quad \text{for } j, k \in \{1, 2\}.
    \]

    \textbf{Condition (iii):} \textbf{Positive Definiteness of the Fisher Information Matrix at $\eta_0$:} \\
    The $2 \times 2$ Fisher information matrix per observation $J(\eta) = (J_{jk}(\eta))_{2 \times 2}$ is:
    \[
    J(\eta) = \begin{pmatrix}
    \frac{\beta(\beta+2\theta)}{\theta^2(\theta+\beta)^2} & -\frac{1}{(\theta+\beta)^2} \\
    -\frac{1}{(\theta+\beta)^2} & \frac{\theta(\theta+2\beta)}{\beta^2(\theta+\beta)^2}
    \end{pmatrix}.
    \]
    The upper-left element is strictly positive:
    \[
    J_{11}(\eta) = \frac{\beta(\beta+2\theta)}{\theta^2(\theta+\beta)^2} > 0 \quad \text{for all } \theta > 0, \beta > 0.
    \]
    The determinant of $J(\eta)$ evaluates to:
    \[
    \det J(\eta) = J_{11} J_{22} - J_{12}^2 = \frac{\beta(\beta+2\theta)\theta(\theta+2\beta) - \theta^2\beta^2}{\theta^2\beta^2(\theta+\beta)^4} = \frac{2\theta\beta(\theta+\beta)^2}{\theta^2\beta^2(\theta+\beta)^4} = \frac{2}{\theta\beta(\theta+\beta)^2} > 0.
    \]
    Since $J_{11}(\eta_0) > 0$ and $\det J(\eta_0) > 0$, the Fisher information matrix $J(\eta_0)$ is strictly positive definite, satisfying Condition (iii) of Chanda (1954).\\

    \textbf{Condition (iv):} \textbf{Domination of the Third-Order Partial Derivatives:} \\
    From \eqref{eq:third_deriv_app}, 
    the third-order partial derivatives $\frac{\partial^3 \ln f(x; \eta, m)}{\partial \theta^{3-i} \partial \beta^i}$ for $i=0,1,2,3$ are deterministic constants that do not depend on $x$. Consequently, on any compact spherical neighborhood $U_0 \subset \Theta$ containing $\eta_0$, there exist dominating functions $M_{i}(x) \equiv M_{i}$ independent of $x$ such that:

\[
\left| \frac{\partial^3 \ln f(x; \eta, m)}{\partial \theta^{3-i} \partial \beta^i}\right|\le M_i  = \sup_{\eta \in U_0} \left| \frac{\partial^3 \ln f(x; \eta, m)}{\partial \theta^{3-i} \partial \beta^i} \right| < \infty \quad \text{for all } i \in \{0, 1, 2, 3\}.
\]
   
Since each $M_{i}$ is a finite constant, its expectation is trivially bounded:
\[
\mathbb{E}_{\eta_0}[M_{i}(X)] = M_{i} < \infty \quad \text{for all } i \in \{0, 1, 2, 3\}.
\]
  
Because Conditions (i)--(iv) of Chanda (1954) are satisfied, the main theorem of Chanda (1954) guarantees that, with probability tending to one as $n \to \infty$, there exists a sequence of roots $\hat{\eta}_{\text{MLE}} = (\hat{\theta}_{\text{MLE}}, \hat{\beta}_{\text{MLE}})^T$ of the likelihood equations that is consistent for $\eta_0$, provides a local maximum of the log-likelihood function, and satisfies:
\[
\sqrt{n}(\hat{\eta}_{\text{MLE}} - \eta_0) \xrightarrow{d} \mathcal{N}_2\big(0, \Sigma(\eta_{0}))\big),
\]
where  \[\Sigma(\eta_{0})=
J(\eta_{0})^{-1} =  \begin{pmatrix} \frac{\beta_{0}(\beta_{0} + 2\theta_{0})}{\theta_{0}^{2}(\theta_{0}+\beta_{0})^{2}} & -\frac{1}{(\theta_{0}+\beta_{0})^{2}} \\ -\frac{1}{(\theta_{0}+\beta_{0})^{2}} & \frac{\theta_{0}(\theta_{0} + 2\beta_{0})}{\beta_{0}^{2}(\theta_{0}+\beta_{0})^{2}} \end{pmatrix}^{-1} = \begin{pmatrix}
\frac{\theta_{0}^{3} + 2\theta_{0}^{2}\beta_{0}}{2\beta_{0}} & \frac{\theta_{0}\beta_{0}}{2} \\
\frac{\theta_{0}\beta_{0}}{2} & \frac{\beta_{0}^{3} + 2\beta_{0}^{2}\theta_{0}}{2\theta_{0}}
\end{pmatrix}.
\]

\subsection{Proof of Theorem \ref{k-value}}
\begin{proof}
Fisher consistency requires $\mathbb{E}_{\mu_{0}}[\psi_{k}(Z)] = 0$, where $Z = \ln(X/m_{0})$ has density $g(z) = C e^{\beta z}\mathbb{I}(z \le 0) + C e^{-\theta z}\mathbb{I}(z > 0)$ with $C = \frac{\theta\beta}{\theta+\beta}$. We evaluate $\mathbb{E}[\psi_{k}(Z)] = \sum_{j=1}^4 I_j$ using \eqref{psik}:
     $I_{1} = \int_{-\infty}^{-k} \beta C e^{\beta z} dz = C e^{-\beta k}$, \ \  
     $I_{2} = \int_{-k}^{0} \left(-\frac{\beta}{k}z\right) C e^{\beta z} dz = C \left[\frac{1}{\beta k} - e^{-\beta k} - \frac{e^{-\beta k}}{\beta k}\right]$.\\
     $I_{3} = \int_{0}^{k} \left(-\frac{\theta}{k}z\right) C e^{-\theta z} dz = C \left[e^{-\theta k} + \frac{e^{-\theta k}}{\theta k} - \frac{1}{\theta k}\right]$, \ \   
    $I_{4} = \int_{k}^{\infty} (-\theta) C e^{-\theta z} dz = -C e^{-\theta k}$. \\
Summing the pieces gives:
\[
\mathbb{E}[\psi_{k}(Z)] = \frac{C}{k} \left[\frac{1 - e^{-\beta k}}{\beta} - \frac{1 - e^{-\theta k}}{\theta}\right].
\]
Setting this expectation to zero directly yields \eqref{balance}.
\end{proof}

\begin{lemma}[Huber's Regularity Conditions]\label{Huber-conditions}
Let $Y_1, \dots, Y_n$ be i.i.d. random variables with density $g(y - \mu_0)$. Let $\hat{\mu}_n$ be an M-estimator obtained via $\psi$ as a solution of $\sum_{i=1}^n \psi(Y_i - \hat{\mu}_n) = 0$. If:
\begin{enumerate}
    \item $\psi(z)$ is non-increasing, continuous, and satisfies $\mathbb{E}_{\mu_0}[\psi(Y - \mu_0)] = 0$.
    \item $W = \mathbb{E}_{\mu_0}[\psi^2(Y - \mu_0)] < \infty$.
    \item $\lambda(\mu) = \mathbb{E}_{\mu_0}[\psi(Y - \mu)]$ is differentiable at $\mu_0$ with $V = -\lambda'(\mu_0) = -\mathbb{E}_{\mu_0}[\psi'(Y - \mu_0)] > 0$.
\end{enumerate}
Then $\sqrt{n}(\hat{\mu}_n - \mu_0) \xrightarrow{d} \mathcal{N}\left(0, \frac{W}{V^2}\right)$.
\end{lemma}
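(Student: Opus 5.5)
The plan is the classical monotone-score argument for location M-estimators (Huber, 1964). Set $\lambda_n(\mu) = n^{-1}\sum_{i=1}^n \psi(Y_i - \mu)$ and $\lambda(\mu) = \mathbb{E}_{\mu_0}[\psi(Y-\mu)]$. Since $\psi$ is non-increasing, $z \mapsto \psi(Y_i - z)$ is non-decreasing in $\mu$, so $\lambda_n$ and $\lambda$ are monotone in $\mu$. Because $\psi$ is continuous, $\lambda_n$ is continuous and a root $\hat\mu_n$ exists, or at least the interval of sign change is well defined; one may take, for example, the midpoint of the zero set. Monotonicity gives, for every $t$, the event equivalence
\[
\{\hat\mu_n < t\} \subseteq \{\lambda_n(t) \ge 0\} \subseteq \{\hat\mu_n \le t\},
\]
up to the convention chosen for a nonunique root. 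This reduces the distribution of $\hat\mu_n$ to that of a normalized sum of i.i.d.\ terms evaluated at a deterministic point.

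\textbf{Step 1: consistency.} Fix $\varepsilon>0$. From condition 3, $\lambda(\mu_0)=0$ and $\lambda'(\mu_0)=-V<0$, so $\lambda(\mu_0+\varepsilon)>0>\lambda(\mu_0-\varepsilon)$ for small $\varepsilon$. By the strong law of large numbers, $\lambda_n(\mu_0\pm\varepsilon)\to\lambda(\mu_0\pm\varepsilon)$ almost surely. The sandwich inclusion then places $\hat\mu_n$ in $(\mu_0-\varepsilon,\mu_0+\varepsilon)$ with probability tending to one.

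\textbf{Step 2: the distributional limit.} Take $t_n = \mu_0 + x/\sqrt n$. Then
\[
P(\sqrt n(\hat\mu_n-\mu_0) \le x) \approx P(\lambda_n(t_n) \ge 0) = P\Bigl(\sqrt n\,[\lambda_n(t_n)-\lambda(t_n)]/s(t_n) \ge -\sqrt n\,\lambda(t_n)/s(t_n)\Bigr),
\]
where $s^2(t)=\mathrm{Var}\,\psi(Y-t)$.

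\begin{itemize}
\item By differentiability in condition 3, $\sqrt n\,\lambda(t_n)\to -xV$.
\item For the random term, apply the Lindeberg--Feller central limit theorem to the triangular array $\psi(Y_i-t_n)$.
\end{itemize}

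This requires $\mathbb{E}[\psi^2(Y-t_n)]\to W$, i.e.\ $s^2(t_n)\to s^2(\mu_0)=W$, using $\lambda(\mu_0)=0$. If $\psi$ is bounded, as the paper's $\psi_k$ is, this follows from continuity of $\psi$ and dominated convergence. In general, I would need an $L^2$-continuity argument: for monotone $\psi$, $\psi(Y-t_n)$ is sandwiched between $\psi(Y-\mu_0\pm\delta)$, and finiteness of $W$ in a neighborhood gives uniform integrability. If necessary I would add the mild assumption that $\mathbb{E}\psi^2(Y-\mu)$ is finite near $\mu_0$, which holds trivially for bounded $\psi_k$. The limit probability is then $P(N(0,1)\ge -xV/\sqrt W)=\Phi(xV/\sqrt W)$, which is exactly the $\mathcal N(0,W/V^2)$ distribution function.

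\textbf{Main obstacle.} The expected difficulty is this Lindeberg step together with the variance continuity in a shrinking neighborhood. Handling the boundary conventions in the event inclusions is routine, since the limit distribution is continuous. For the application to Theorem~\ref{mhat-CAN}, I would then use the delta method with $m=e^\mu$, which produces the factor $m_0^2$.
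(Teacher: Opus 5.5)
The paper does not prove this lemma on its own. It quotes it as Huber's result and re-derives the conclusion inside the proof of Theorem~\ref{mhat-CAN} by the generic smooth Z-estimator argument: a mean-value expansion of $n^{-1/2}\sum_i\psi_k(Y_i-\hat\mu_M)=0$ about $\mu_0$, then the WLLN together with an assumed consistency of $\hat\mu_M$ to send $n^{-1}\sum_i\psi_k'(Y_i-\bar\mu)$ to $-V$, then the CLT for the score and Slutsky.

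You instead use Huber's original monotonicity argument. You differentiate only $\lambda$, never $\psi$, turn $\{\sqrt n(\hat\mu_n-\mu_0)\le x\}$ into a sign event for $\lambda_n$ at the deterministic point $t_n=\mu_0+x/\sqrt n$, and apply Lindeberg--Feller. This buys several things:
\begin{itemize}
\item it uses exactly the hypotheses the lemma lists;
\item it gives existence of a root and consistency for free;
\item it is indifferent to the kinks of $\psi_k$ at $z=\pm k$ and at $z=0$ (when $\theta\neq\beta$), which is exactly where the paper's single-intermediate-point expansion with a pointwise $\psi_k'$ is not literally valid.
\end{itemize}
The price is the triangular-array CLT and continuity of $\mu\mapsto\mathbb{E}\psi^2(Y-\mu)$ at $\mu_0$. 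That does need a neighbourhood moment bound beyond condition~2. You are right to add it, and it is automatic for the bounded $\psi_k$. The paper's route, for its part, is the one that extends to the three-dimensional system of Theorem~\ref{Joint-CAN}, where no monotonicity device is available.

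Two small corrections.

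\textbf{Sign of $\lambda'(\mu_0)$.} For non-increasing $\psi$, $\lambda$ is non-decreasing, so $\lambda'(\mu_0)=-\mathbb{E}[\psi'(Y-\mu_0)]=+V$. The lemma's ``$V=-\lambda'(\mu_0)$'' carries the wrong sign and contradicts its own second expression. You took $\lambda'(\mu_0)=-V$ and $\sqrt n\,\lambda(t_n)\to -xV$. Combined with your event inclusion, that would give the limit $P(N(0,1)\ge xV/\sqrt W)$, which is decreasing in $x$. Your Step~1 conclusion $\lambda(\mu_0+\varepsilon)>0>\lambda(\mu_0-\varepsilon)$ and your final $\Phi(xV/\sqrt W)$ are correct only with $\sqrt n\,\lambda(t_n)\to xV$, so use that throughout.

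\textbf{Event inclusions.} The inclusions valid for any choice of root are $\{\lambda_n(t)>0\}\subseteq\{\hat\mu_n<t\}$ and $\{\hat\mu_n\le t\}\subseteq\{\lambda_n(t)\ge 0\}$. Your inclusion $\{\lambda_n(t)\ge 0\}\subseteq\{\hat\mu_n\le t\}$ fails under the midpoint convention whenever $t$ lies in the left half of a nondegenerate zero set of $\lambda_n$. Because the normal limit is continuous, both correct bounds converge to the same value, and the argument goes through.
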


\subsection{Proof of Theorem \ref{mhat-CAN}}
\begin{proof}
We verify Conditions 1--3 of Lemma~\ref{Huber-conditions} for $\psi_k(z)$ from \eqref{psik}:
\begin{enumerate}
    \item $\psi_k(z)$ is continuous on $\mathbb{R}$ and non-increasing everywhere ($\psi_k'(z) \le 0$ almost everywhere). By Theorem~\ref{k-value}, condition \eqref{balance} guarantees Fisher consistency.
    \item Since $|\psi_k(z)| \le \max(\beta, \theta) < \infty$, the second moment $W = \mathbb{E}[\psi_k^2(Z)] < \infty$ exists. Direct integration yields \eqref{eq:W_def}.
    \item Differentiating the expected score yields $V = -\mathbb{E}[\psi_k'(Z)]$, which evaluates to \eqref{eq:V_def} and is strictly positive for all $k, \theta, \beta > 0$.
\end{enumerate}
A first-order Taylor expansion of $\frac{1}{n}\sum_{i=1}^n \psi_k(Y_i - \hat{\mu}_M) = 0$ around $\mu_0$ gives:
\begin{equation} \label{eq:app_taylor_m}
0 = \frac{1}{\sqrt{n}}\sum_{i=1}^n \psi_k(Y_i - \hat{\mu}_M) = \frac{1}{\sqrt{n}}\sum_{i=1}^n \psi_k(Y_i - \mu_0) + \left[ \frac{1}{n}\sum_{i=1}^n \psi'_k(Y_i - \bar{\mu}) \right] \sqrt{n}(\hat{\mu}_M - \mu_0),
\end{equation}
where $\bar{\mu}$ lies on the line segment joining $\hat{\mu}_M$ and $\mu_0$. 

By the Weak Law of Large Numbers (WLLN) and consistency of $\hat{\mu}_M$, the bracketed term converges in probability to its expectation:
\[
\frac{1}{n}\sum_{i=1}^n \psi'_k(Y_i - \bar{\mu}) \xrightarrow{p} E_{\mu_0}[\psi'_k(Y - \mu_0)] \equiv -V.
\]
Evaluating $V = -E[\psi'_k(Z)]$ directly using density $g(z)$ yields \eqref{eq:V_def}. Rearranging \eqref{eq:app_taylor_m} leads to the linear asymptotic representation: 
\[
\sqrt{n}(\hat{\mu}_M - \mu_0) = \frac{\frac{1}{\sqrt{n}} \sum_{i=1}^n \psi_k(Y_i - \mu_0)}{V} + o_p(1) 
\]
By the classical central limit theorem, $\frac{1}{\sqrt{n}} \sum_{i=1}^n \psi_k(Y_i - \mu_0) \xrightarrow{d} \mathcal{N}(0,W)$, where $W$ is given by \eqref{eq:W_def}. Hence $\sqrt{n}(\hat{\mu}_M - \mu_0) \xrightarrow{d} \mathcal{N}\left(0, \frac{W}{V^2}\right)$.
Applying the Delta method with $h(\mu) = e^\mu$ ($h'(\mu_0) = m_0$) yields $\sqrt{n}(\hat{m}_M - m_0) \xrightarrow{d} \mathcal{N}(0, \sigma_M^2)$ with $\sigma_M^2 = m_0^2 \frac{W}{V^2}$.
\end{proof}

\subsection{Proof of Theorem \ref{Joint-CAN} and Sandwich Covariance Derivation}
\begin{proof}
Let $\Psi(X_i, \Xi) = (\psi_\theta(X_i, \Xi), \psi_\beta(X_i, \Xi), \psi_m(X_i, \Xi))^T$ be the estimating functions defined in \eqref{eq:psi_theta_def}--\eqref{eq:psi_m_def}. For any non-zero vector $t \in \mathbb{R}^3$, the scalar projections $\{S_i = t^T \Psi(X_i, \Xi_0)\}$ forms a sequence of iid rvs with $\mathbb{E}[S_i] = 0$ and $\text{Var}(S_i) = t^T \mathbf{B} t < \infty$ with $\mathbf{B} = \mathbb{E}[\Psi(X, \Xi_0)\Psi(X, \Xi_0)^T]$. By the Lindeberg--L\'evy CLT,
\begin{equation*}
\frac{1}{\sqrt{n}}\sum_{i=1}^n{S_i} \xrightarrow{d} \mathcal{N}(0, \mathbf{t^TBt})
\end{equation*}
and by Cram\'er--Wold theorem, $\frac{1}{\sqrt{n}}\sum_{i=1}^n \Psi(X_i, \Xi_0) \xrightarrow{d} \mathcal{N}_3(0, \mathbf{B})$.
Next, expanding $\sum_{i=1}^n \mathbf{\Psi}(X_i, \hat{\boldsymbol{\Xi}}) = \mathbf{0}$ via a multivariate first-order Taylor expansion around $\boldsymbol{\Xi}_0$:
\[
\mathbf{0} = \frac{1}{\sqrt{n}}\sum_{i=1}^n \mathbf{\Psi}(X_i, \boldsymbol{\Xi}_0) + \left[ \frac{1}{n}\sum_{i=1}^n \nabla_{\boldsymbol{\Xi}} \mathbf{\Psi}(X_i, \bar{\boldsymbol{\Xi}}) \right] \sqrt{n}(\hat{\boldsymbol{\Xi}} - \boldsymbol{\Xi}_0),
\]
where $\bar{\boldsymbol{\Xi}}$ lies on the line segment connecting $\hat{\boldsymbol{\Xi}}$ and $\boldsymbol{\Xi}_0$. By the Uniform Law of Large Numbers, $\frac{1}{n}\sum_{i=1}^n \nabla_{\boldsymbol{\Xi}} \mathbf{\Psi}(X_i, \bar{\boldsymbol{\Xi}}) \xrightarrow{p} -\mathbf{A}$, where $\mathbf{A} = -E[\nabla_{\boldsymbol{\Xi}} \mathbf{\Psi}(X, \boldsymbol{\Xi}_0)]$.

Rearranging and inverting the non-singular matrix $\mathbf{A}$:
\[
\sqrt{n}(\hat{\boldsymbol{\Xi}} - \boldsymbol{\Xi}_0) = \mathbf{A}^{-1} \left[ \frac{1}{\sqrt{n}}\sum_{i=1}^n \mathbf{\Psi}(X_i, \boldsymbol{\Xi}_0) \right] + o_p(1).
\]
Applying Slutsky's Theorem yields the final sandwich variance result:
\[
\sqrt{n}(\hat{\boldsymbol{\Xi}} - \boldsymbol{\Xi}_0) \xrightarrow{d} \mathcal{N}\left(\mathbf{0}, \, \mathbf{\Omega}\right), \text{where} \  \mathbf{\Omega} = \mathbf{A}^{-1} \mathbf{B} (\mathbf{A}^{-1})^T. 
\]

The matrices $\mathbf{A}$ and $\mathbf{B}$ are evaluated as: 

\[ 
A = \begin{pmatrix}
J_{11} & J_{12} & -\frac{\beta_0}{m_0(\theta_0+\beta_0)} \\
J_{21} & J_{22} & \frac{\theta_0}{m_0(\theta_0+\beta_0)} \\
0 & 0 & \frac{V}{m_0}
\end{pmatrix} 
\quad \text{and} \quad 
B = \begin{pmatrix}
J_{11} & J_{12} & B_{13} \\
J_{21} & J_{22} & B_{23} \\
B_{31} & B_{32} & W
\end{pmatrix}
\]
where $J_{ij}$ are the elements of $J(\eta_0)$ from \eqref{eq:fisher_matrix}, $V$ is given in \eqref{eq:V_def} and $W$ defined in \eqref{eq:W_def}. The cross-covariance elements $B_{13} =\mathbb{E}(\psi_{\theta}\psi_m)$ and $B_{23} = \mathbb{E}(\psi_{\beta}\psi_m)$ are computed by integration over $Z=ln(X/m_0)$ with density $g(z) = Ce^{\beta_0z}\mathbb{I}(z \le 0)+Ce^{-\theta_0z}\mathbb{I}(z > 0)$ and noting that $\mathbb{E}(\psi_k(Z))=0$:
\begin{align*}
B_{13} =\mathbb{E}(\psi_{\theta}\psi_m)= \mathbb{E}\left[\left( \frac{\beta_0}{\theta_0(\theta_0+\beta_0)}-\mathbb{I}(z > 0)z\right) \psi_k(z)\right] = \mathbb{E}[\mathbb{I}(z > 0)z\psi_k(z)]\\
= -\frac{\beta_0}{\theta_0+\beta_0} \left[ \frac{2 - 2e^{-\theta_0 k} - \theta_0 k e^{-\theta_0 k}}{\theta_0 k} \right], \\ 
B_{23} = \mathbb{E}(\psi_{\beta}\psi_m)=\mathbb{E}\left[\left( -\frac{\theta_0}{\beta_0(\theta_0+\beta_0)}-\mathbb{I}(z \le 0)z\right) \psi_k(z)\right] = \mathbb{E}[\mathbb{I}(z \le 0)z\psi_k(z)]\\
=\frac{\theta_0}{\theta_0+\beta_0} \left[ \frac{2 - 2e^{-\beta_0 k} - \beta_0 k e^{-\beta_0 k}}{\beta_0 k} \right].
\end{align*}

Inverting $A$ gives:
\[
A^{-1} = \begin{pmatrix}
\Sigma_{11} & \Sigma_{12} & -\frac{m_0}{V} C_1 \\
\Sigma_{21} & \Sigma_{22} & -\frac{m_0}{V} C_2 \\
0 & 0 & \frac{m_0}{V}
\end{pmatrix},
\]
where $\Sigma = J^{-1}$ is given in \eqref{eq:sigma_eta} and $C_i = \Sigma_{i1} A_{13} + \Sigma_{i2} A_{23}$ for $i \in \{1, 2\}$. 

Evaluating $\Omega = A^{-1} B (A^{-1})^T$ yields:
\begin{align*}
\Omega_{33} &= m_0^2 \frac{W}{V^2} = \sigma_M^2, \\
\Omega_{ij} &= \Sigma(\eta_0)_{ij} + \frac{m_0^2}{V^2}\left( C_i C_j W - V(C_i D_j + C_j D_i) \right), \quad i,j \in \{1,2\}, \\
\Omega_{i3} &= \frac{m_0}{V}\left( D_i - C_i \frac{W}{V} \right), \quad i \in \{1,2\},
\end{align*}
where $D_i = \Sigma_{i1} B_{13} + \Sigma_{i2} B_{23}$. This completes the derivation.
\end{proof}

\end{document}